\declaretheoremstyle[
  headfont=\bfseries, 
  bodyfont=\normalfont,
]{boldstyle}
\declaretheorem[style=boldstyle, name=Theorem]{theorem}
\declaretheorem[style=boldstyle, name=Lemma]{lemma}
\declaretheorem[style=boldstyle, name=Remark]{remark}
\declaretheorem[style=boldstyle, name=Definition]{definition}
\newcommand{\C}{\mathcal{C}} 
\newcommand{\Entity}{\mathbf{E}} 
\newcommand{\Predicate}{\mathbf{P}} 
\newcommand{\Variable}{\mathbf{X}} 
\newcommand{\Constant}{\mathbf{C}} 
\newcommand{\Term}{\mathbf{T}} 
\newcommand{\AttConst}{\mathbf{C}_\mathbf{A}} 
\newcommand{\Model}{\mathcal{M}} 
\newcommand{\Evidence}{\mathcal{E}} 
\newcommand{\QSent}{\mathbf{Q}} 
\newcommand{\Width}{w} 
\newcommand{\Language}{\mathcal{L}} 
\newcommand{\ProbMeas}{\mathcal{P}_{I}} 
\newcommand{\ProbLoc}[1]{\mathcal{P}_{#1}} 
\newcommand{\argmax}{\mathop{\mathrm{arg\,max}}} 
\newcommand{\NumAttConst}{K}
\begin{document}

\title{Analysis of Semantic Communication for Logic-based Hypothesis Deduction

\thanks{This work is supported by National Science Foundation under award ID MLWiNS-2003002 and a Gift from Intel Co.}
}

\author{Ahmet~Faruk~Saz,~\IEEEmembership{Student Member,~IEEE,}
        Siheng~Xiong,~\IEEEmembership{Student Member,~IEEE,}
        and~Faramarz~Fekri,~\IEEEmembership{Fellow,~IEEE}
}

\maketitle

\begin{abstract}

This work presents an analysis of semantic communication in the context of First-Order Logic (FOL)-based deduction. Specifically, the receiver holds a set of hypotheses about the State of the World (SotW), while the transmitter has incomplete evidence about the true SotW but lacks access to the ground truth. The transmitter aims to communicate limited information to help the receiver identify the hypothesis most consistent with true SotW. We formulate the objective as approximating the posterior distribution of the transmitter at the receiver. Using Stirling’s approximation, this reduces to a constrained, finite-horizon resource allocation problem. Applying the Karush-Kuhn-Tucker conditions yields a truncated water-filling solution. Despite the problem’s non-convexity, symmetry and permutation invariance ensure global optimality. Based on this, we design message selection strategies, both for single- and multi- round communication, and model the receiver’s inference as an $m$-ary Bayesian hypothesis testing problem. Under the Maximum A Posteriori (MAP) rule, our communication strategy achieves optimal performance within budget constraints. We further analyze convergence rates and validate the theoretical findings through experiments, demonstrating reduced error over random selection and prior methods.

\end{abstract}

\begin{IEEEkeywords}
First-order logic, semantic communication, water-filling, hypothesis testing, deduction

\end{IEEEkeywords}

\section{Introduction}

\thispagestyle{empty}

Semantic communication redefines data transmission by prioritizing meaning over raw data, offering critical benefits for intelligent services and next-generation networks like 6G. Previous works have explored semantic compression techniques \cite{Gund_Yen}, knowledge-enhanced transmission strategies \cite{Qin2021SemanticCP}, and semantic-aware resource allocation. These efforts have shown the promise of semantic communication for improving efficiency and robustness across diverse intelligent systems and communication environments.

While these works largely emphasize semantic compression and transmission optimization, our focus shifts to a distinct but complementary problem: logical deduction of hypotheses under resource constraints. In many real-world scenarios—such as distributed sensing or privacy-sensitive environments—the transmitter lacks knowledge of the receiver’s precise goal. Consequently, the challenge becomes transmitting the most semantically content-informative message possible. This framework has the potential to empower a variety of applications involving logical deduction: drones in search-and-rescue operations can share logical summaries of their local observations \cite{Javaid2024LargeLM}; wearable health monitors can transmit structured symptom patterns to support remote diagnostics \cite{Mahmud}; autonomous vehicles can communicate compact causal cues like “reduced traction ahead” \cite{Mahmud}; and legal or educational systems can tailor logic-based explanations to user intent \cite{Singamaneni2023ASO}.

This paper builds on prior work that introduced a First-Order Logic (FOL)-based semantic communication framework for deductive inference. Drawing from the foundational ideas of Carnap \cite{c2} and Hintikka \cite{c10}, we treated FOL as a formal language for representing and transmitting structured knowledge. While classical semantic systems were computationally intractable, our earlier work \cite{saz2024lossy, saz2024model, saz2025discd} addressed scalability by framing communication as a boolean-satisfiability model selection problem. Using model counting, we developed a bit-efficient, task-agnostic transmission strategy that enabled logical inference at the receiver, even when the transmitter lacked access to the downstream task.

In this paper, we take our efforts a step further and formalize semantic communication as a posterior distribution alignment problem over a logical space structured by First-Order Logic (FOL). The world is represented using a hierarchy of constituents, attributive constituents, and Q-sentences, and uncertainty is quantified via inductive logical probability over these structures. The transmitter, observing partial evidence about the true state, aims to select messages that maximize semantic informativeness without access to the receiver’s hypothesis set. We show that optimal message selection reduces to minimizing the KL divergence between transmitter and receiver posteriors over constituents, which in turn becomes a constrained resource allocation problem over observed attributive constituents. To make the analysis tractable, we apply Stirling’s approximation and exploit the symmetry of the resulting objective, deriving a globally optimal solution using a truncated water-filling algorithm. We extend this framework to multi-round communication, showing that repeated optimal allocations converge to an ideal posterior configuration under cumulative constraints. This formulation enables principled message selection under realistic communication budgets and, under the Bayesian MAP criterion at the receiver, minimizes error in hypothesis deduction. We derive the convergence rate for the multi-round setup and show that the algorithm converges linearly. Finally, we validate our theoretical findings on a custom deduction dataset, demonstrating superior performance in hypothesis selection compared to prior work, greedy selection, and random baselines.
      
The contribution of this paper is twofold. First, it introduces a formal connection between semantic communication and logical inference by framing message transmission as the problem of posterior distribution alignment between transmitter and receiver in a First-Order Logic framework. This perspective distinguishes the present work from prior approaches that emphasize semantic compression or task-specific heuristics, and provides a principled way to quantify semantic informativeness through distributional divergence. Second, we develop and analyze a resource allocation formulation of this problem, showing that the optimal message composition follows a truncated water-filling solution. We establish that this strategy achieves global optimality in the relaxed setting and extend it to multi-round communication, where convergence properties are derived. Simulation results confirm that aligning posterior distributions yields significant improvements in hypothesis deduction accuracy compared to prior semantic communication strategies. Together, these contributions advance semantic communication beyond compression-oriented formulations and provide a theoretically grounded approach for logic-based deductive reasoning under communication constraints. In the next section, we introduce FOL representation of the world and problem setup.

\section{Related Work}

Semantic and goal-oriented communication systems, inspired by Weaver’s Level B notion, focus on transmitting meaning rather than symbols. DeepSC and DeepJSCC \cite{xie2021deep, bourtsoulatze2019deep} pioneered autoencoder-based designs, while other works introduced task-oriented training \cite{shao2022task, yang2023task} and functional compression \cite{Saidutta2022AML}. These methods achieve notable gains in efficiency but often rely on black-box training or static task definitions.  

Beyond end-to-end learning, neuro-symbolic and graph-based approaches aim to combine logical reasoning with semantic transmission. 
Prior work \cite{xiongtilp, yang2024temporal, xiong2024teilp, xiong2024large, xiong2024deliberate} has shown that these methods strengthen semantic understanding by capturing structured meaning, making them promising foundations for communication systems.
RuleTaker and ProofWriter \cite{clark2020transformers, tafjord2021proofwriter} demonstrated transformer-based reasoning, while recent studies integrate knowledge graphs and large models \cite{guo2023kg, nour2024semantic} to encode structured semantics. Vector quantization \cite{fu2023vector} and generative models \cite{grassucci2023generative} further extend semantic communication by aligning semantic tokens with modulation schemes or reconstructing content via shared AI models.

Our work differs in that it does not rely on neural compression or task-specific encoders. Instead, we adopt a symbolic and probabilistic formalism grounded in First-Order Logic and inductive probabilities tailored to logical inference. This allows us to frame semantic communication for hypothesis deduction as a posterior distribution alignment problem, derive a globally optimal truncated water-filling allocation, and establish convergence guarantees. Thus, our contributions provide a theoretical complement to learning-driven and application-specific approaches in semantic communication.

\section{Background on FOL, Inductive Probabilities, and Semantic Information}

\subsection{Representing the State of the World via First-Order Logic}

We begin by illustrating how the state of the world can be represented using First-Order Logic (FOL) following the conventions presented in \cite{c4, nelte}. Consider a language $\Language$ whose syntax is defined over vocabulary consisting of a finite set of dyadic predicate symbols $\Predicate = \{P_i\}_{i=1}^{k}$, a countably infinite set of variables $\Variable = \{x_i\}_{i \in \mathbb{N}}$, a countable set of constant symbols $\Constant = \{\top, \bot, ...\}$, the logical connectives $\{\neg, \land, \lor\}$, the quantifiers $\{\forall, \exists\}$, and parentheses $()$ for disambiguation. 

The syntax of a formal language has no inherent meaning on its own; meaning is provided through a \emph{model}. A model for a language ($\Language$) is a pair $\Model=\langle \Entity, J\rangle$, where $\Entity$ is a nonempty, countably infinite set called the domain (or universe) whose elements are referred to as entities, and $J$ is an interpretation function that assigns constant symbols to entities in $\Entity$ and each binary predicate symbol $P \in \Predicate$ a binary relation $ J(P) \subseteq \Entity \times \Entity$. 

To evaluate formulas that contain variables, a valuation function $v$ assigns domain elements to variables $v[x_i \mapsto e_i]$. Given a model and a valuation, the satisfaction relation ($\Model, v \models F$) formally specifies whether a formula ($F$) is true in the model $\Model$ under the corresponding valuation $v$. The \emph{state of the world} is then represented through true but unknown model-valuation pair $\mathcal M^*, v^*$, which exhaustively determines the structure of the domain $\Entity$ and the complete pattern of relationships among its elements, thereby logically entailing all correct statements of $\Language$.

As an initial step towards characterizing the state of the world in the framework explained, we start by defining specific predicate-relational configurations between entities through \textit{Q-sentences}.

\begin{definition}[Q-sentence]
For a fixed configuration $i$, let
$\delta_{i,P},\, \delta'_{i,P} \in \{+1,-1\}, \forall\, P \in \Predicate $.
For $x_1, x_2 \in \Variable$, define the \textit{Q-sentence} $Q_i(x_1, x_2)$ by
\[
Q_i(x_1, x_2) = \bigwedge_{P \in \Predicate}
\left( \delta_{i,P} P(x_1, x_2) \land \delta'_{i,P} P(x_2,x_1)\right).
\]
\end{definition}

\begin{remark}
The coefficients $\delta_{i,P}, \delta'_{i,P} \in \{+1,-1\}$ encode whether the
predicate $P$ appears positively or negatively in the Q-sentence. Concretely,
$\delta_{i,P} = +1$ corresponds to the atomic formula $P(x_1, x_2)$, while
$\delta_{i,P} = -1$ corresponds to its negation $\neg P(x_1, x_2)$; analogously,
$\delta'_{i,P}$ governs the occurrence of $P(x_2,x_1)$. Thus, each configuration $i$
specifies a complete choice of affirmation or negation for every predicate in
both argument orders.
\end{remark}

This definition implies that there are $4^{|\Predicate|}$ distinct Q-sentences. For a fixed valuation $v[x_1 \mapsto e_1, x_2 \mapsto e_2]$, exactly one of these $4^{|\Predicate|}$ Q-sentences evaluates to 1 (true), while all others evaluate to 0 (false). Building on this observation, we introduce the notion of an attributive constituent, which captures the complete relational profile of a given entity.

\begin{definition}[Attributive Constituent]
Let $\QSent$ be the set of all Q-sentences. For a fixed configuration $j$, let
$\delta_{j,Q} \in \{+1,-1\}, \forall\, Q \in \QSent$.
For $x_1 \in \Variable$, an \textit{attributive constituent} $C_j(x_1)$ is defined as:
\begin{equation}
    C_j(x_1) = \bigwedge_{Q \in \QSent} \delta_{j,Q}\bigl((\exists x_2)\, Q(x_1,x_2)\bigr)
\end{equation}
\end{definition}

\begin{remark}
As before, the coefficients $\delta_{j, Q} \in \{+1,-1\}$ encode polarity:
$\delta_{j, Q} = +1$ denotes the formula $(\exists x_2)Q(x_1, x_2)$, while
$\delta_{j, Q} = -1$ denotes its negation $\neg (\exists x_2)Q(x_1, x_2)$. $\neg (\exists x_2) Q(x_1, x_2)$ asserts that there exists no $x_2$ for which $Q(x_1, x_2)$ evaluates to true.
\end{remark}

This expression encapsulates the complete set of relationships that an entity \(x_1\) holds within the world, thereby characterizing \emph{what kind of individual} \(x_1\) is with respect to the relationships it possesses or lacks with others. Specifically, \(C_j(x_1)\) specifies the pattern of \(Q\)-relations that the individual \(x_1\) has with all other entities in the universe. It therefore explicitly identifies which Q-sentences, among the \(4^{|\Predicate|}\) possible ones, are satisfied by \(x_1\), as well as which are not.

For every individual \(x_1\) in the universe, there exists a corresponding attributive constituent \(C_j(x_1)\). While distinct individuals may share the same attributive constituent, the total number of distinct attributive constituents is bounded above by \(2^{4^{|\Predicate|}}\) by definition. 

To represent the state of the world in $\Entity$, we define \textit{constituents} as combinations of attributive constituents.

\begin{definition}[Constituent]
Let $\AttConst$ be the set of all attributive constituents. For a fixed configuration $k$, let
$\delta_{k,C_A} \in \{+1,-1\}, \quad \forall\, C_A \in \AttConst$. A \textit{constituent} \( C^\Width \) of width \( \Width \) in \( \C \) is a conjunction of attributive constituents:
\begin{equation}
    C^{\Width, k} = \bigwedge_{ C_A \in \AttConst} \delta_{k, C_A}\bigl((\exists x_1)C_A(x_1)\bigr),
\end{equation}
The width \( \Width \) is the number of attributive constituents in the conjunction with positive polarity $\delta_{k,C_A} = +1$.
\end{definition}

\begin{remark}
As before, the coefficients $\delta_{k,C_A} \in \{+1,-1\}$ encode polarity:
$\delta_{k,C_A} = +1$ denotes the formula $(\exists x_1)C_A(x_1)$, while
$\delta_{k,C_A} = -1$ denotes its negation $\neg (\exists x_1)C_A(x_1)$. \(\neg (\exists x_1)\, C_A(x_1)\) asserts that there exists no \(x_1\) for which \(C_A(x_1)\) evaluates to true.
\end{remark}

This definition implies that there are at most \(2^{2^{4^{|\Predicate|}}}\) distinct constituents. The constituents \(C^{\Width,k}\), for \(k = 1, \ldots\), enumerate all normal forms that can be constructed from the primitive predicate set \(\Predicate\), propositional connectives, and at most two quantifiers. Each constituent explicitly specifies which attributive constituents in \(\AttConst\) are realized in the universe, thereby fully characterizing the relationship patterns that exist in \(\Entity\).

Although up to \(2^{2^{4^{|\Predicate|}}}\) constituents can be defined for a universe \(\Entity\), exactly one of them is satisfied. Let $\mathcal C$ be the set of all constituents that can be defined in $\Language$ for the universe $\Entity$. For any true model--valuation pair \((\mathcal{M}^*, v^*)\) over \(\Entity\), there exists a unique constituent \(C^* \in \mathcal{C}\) such that \(\mathcal{M}^* \models C^*\). This constituent represents the \emph{true state of the world}.

We therefore interpret constituents as potential candidates for the state of the world, with the satisfied constituent corresponding to the actual state. In practice, determining the true state of the world thus amounts to identifying the constituent that is most consistent—i.e., best supported—by the available evidence \(\mathcal{E}\).

\begin{remark}
Multiple notions of the state of the world are possible, and the definition adopted above represents only one such formulation. In this work, we define the state of the world exclusively in terms of the relationship patterns that exist in \(\Entity\). Alternative formulations are possible; for example, one may define the state of the world by characterizing, for each individual \(x_1\), the type of individual \(x_1\) is with respect to the relationships it holds with others (i.e., as the set of attributive constituents \(\AttConst\) that holds true), or by adopting hybrid definitions that combine both perspectives, as in \cite{c4}.
\end{remark}

Every general sentence (and hence every general hypothesis) $\varphi$ of $\Language$ over $\Entity$ can be expressed as a disjunction of constituents, as established in the theorem below.

\begin{theorem}[Distributive Normal Forms of General Sentences in FOL \cite{nelte, c9, c4}]\label{const_dis}
Let \(\Language\) be a first-order language with a finite set of dyadic predicate symbols \(\Predicate\), and let every model \(\mathcal M = \langle \Entity, J \rangle\) for \(\Language\) have a countably infinite domain \(\Entity\). Let $\C$ be the set of constituents described in $\Language$. For any closed first-order general sentence \(\varphi \in \Language\), consider the set
\[
\{\, C^{w,k} \in \C \;:\; C^{w,k} \models \varphi \,\}
\]

such that
\begin{equation}\label{eq:constituent-disjunction}
\varphi \;\equiv\;
\bigvee_{\substack{C^{w,k} \in \C \\ C^{w,k} \models \varphi}} C^{w,k}.
\end{equation}

For every model \(\mathcal M\), there exists a unique constituent \(C^{\mathcal{M}} \in \C\) such that \(\mathcal M \models C^{\mathcal{M}}\). Then, for every model \(\mathcal M\), $ \mathcal M \models \varphi \quad \text{if and only if} \quad \mathcal M \models C^{\mathcal{M}} \text{ for some constituent } C^{\mathcal{M}} \in \C \text{ with } C^{\mathcal{M}} \models \varphi$.
\end{theorem}

In the first-order logic (FOL)–based representation of the world \(\Entity\), we introduce a probability measure \(\ProbMeas\) over the constituents \(C^{\Width,k}\). This measure is defined using \emph{inductive logical (also known as epistemic) probabilities}, which function as inductive Bayesian posteriors constructed from observed evidence and, when available, a prior distribution.

\begin{definition}[Inductive Logical Probability \cite{c2}]
For any statement \(\varphi\) and evidence \(e\) (observations) in \(\Entity\), the \emph{inductive logical probability} (also referred to as the \emph{degree of confirmation}) \(c(\varphi, e)\) of \(\varphi\) given \(e\) is defined as
\begin{equation}\label{c_func}
c(\varphi, e) \;=\; p(\varphi \mid e) \;=\; \frac{p(\varphi \land e)}{p(e)}.
\end{equation}
\end{definition}

We then define an inductive logical probability measure \(\ProbMeas\) over the set of constituents \(\mathcal{C}\) as
\[
\ProbMeas \;=\; \bigl\{\, c\!\left(C^{w,k}, e\right) \;\big|\; C^{w,k} \in \mathcal{C} \,\bigr\},
\]
where \(c(C^{w,k}, e)\) denotes the inductive logical probability of the constituent \(C^{w,k}\) given the evidence \(e\), and
\[
\sum_{C^{w,k} \in \mathcal{C}} c\!\left(C^{w,k}, e\right) \;=\; 1.
\]

Once the probabilities of the constituents are fixed, the probability—and hence the inductively determined truth—of any hypothesis can be computed in accordance with Theorem~(1).

It is important to note that the assignment of probabilities to constituents is relative to the inductive system adopted. Different systems—such as Carnap’s $c$-system~\cite{carnap_logical_1962}, the $\lambda$-continuum~\cite{c1}, and Hintikka’s two-dimensional framework~\cite{c10}—employ different ways of determining the probability assignments~\cite{c1}. Consequently, these systems induce different probability distributions over constituents. In the present work, we adopt the conventions and framework developed in~\cite{c4}.

To quantify the remaining uncertainty after incorporating new observations, Carnap introduced the notion of semantic \emph{content-information}, defined as follows:

\begin{definition}
The \textit{cont-information} of a sentence quantifies the informativeness of an observation \( e \) for a given observer in a world \( \Entity \) by measuring the reduction in uncertainty about a general sentence (i.e., a general hypothesis) \( \varphi \) of the language \( \Language \).  
Given a sentence \( \varphi \in \Language \) and evidence \( e \) (observations) in \( \Entity \), the cont-information of \( \varphi \) given \( e \) is defined as
\begin{equation}
\text{cont}(\varphi; e) \; \coloneqq \; 1 - c(\varphi, e) \;=\; 1 - p(\varphi \mid e).
\end{equation}
\end{definition}

In intuitive terms, $\text{cont}(\varphi; e)$ measures the reduction in uncertainty about the truth of a sentence $\varphi$ after observing the evidence $e$. In the following, the cont-information measure is used to develop a semantic communication framework.

\section{Node-to-Node Semantic Communication and Deductive Inference of Hypothesis}

We consider a node-to-node semantic communication setting in which a transmitter node \( N_{Tx} \) communicates with a receiver node \( N_{Rx} \). The receiver's objective is to identify the most plausible hypothesis \( h^* \in \mathcal{H} = \{ h_i \mid i \in \mathcal{I}_\mathcal{H} \} \), where each hypothesis \( h_i \) is a self-consistent FOL statement expressible as in Thm. ~\eqref{const_dis} and $\mathcal{I}_\mathcal{H}$ is the index set over hypotheses. The true world state corresponds to an unknown constituent \( C^* \in \mathcal{C} \), and the receiver infers \( h^* \), the hypothesis most consistent with $C^*$ based on messages \( \{\varphi_t\}_{t=1}^T \) received from the transmitter over $T$ rounds.

Previously, we introduced SCLD \cite{saz2024lossy}, a cont-information based message selection framework. At each round \( t \), the transmitter, using its local observations \( \Evidence_t \), selects the semantically most content-informative FOL message \( \varphi_t \) of bounded size \( |\varphi_t| \leq B \), aiming to reduce the receiver's uncertainty over \( \mathcal{C} \) for better deduction of $h^*$. The transmitter does not know anything about the hypothesis set \( \mathcal{H} \). As such, in our framework, the transmitter constructs and sends the message $\varphi_t$ at round \( t \) as:

\begin{equation}
    \varphi_t^* = \argmax_{\varphi \in \Evidence_t \setminus \{\varphi_1, \dots, \varphi_{t-1}\}} \text{cont}\left(\varphi, \Evidence_r \land \bigwedge_{z = 1}^{t-1} \varphi_z\right), \text{s.t. } |\varphi| \leq B,
\end{equation}
where $\Evidence_t$ is the evidence observed by the transmitter and $\Evidence_r$ is the evidence present at the receiver about the SotW.

Another approach to semantic communication for hypothesis inference involves accurately conveying the underlying distribution through messages $\varphi$. In other words, the receiver's objective becomes replicating the sender's intended distribution as precisely as possible. This accuracy requirement can be quantitatively defined using the Kullback--Leibler (KL) divergence between the posterior distributions at the transmitter, \( p(C^{w,k}_t \mid \Evidence_t) \), and at the receiver, \( p(C^{w,k}_r \mid \Evidence_r) \):
\begin{equation}
\mathrm{D}_{\mathrm{KL}}\!\left( p_t \,\|\, p_r \right) 
= \sum_{w = f}^K p_t(w) \log \frac{p_t(w)}{p_r(w)}.
\end{equation}
The transmitter then selects $\varphi$ to minimize this divergence.

Upon receiving \( \varphi_t^* \) at round $t$, the receiver updates its inductive logical probability distribution \( \ProbLoc{I} \) over constituents, refining its epistemic beliefs. After \( T \) rounds, the receiver selects the most probable hypothesis as:
\begin{equation}\label{optim_comm}
    h^* = \argmax_{h \in \mathcal{H}} c\left(h, \bigwedge_{t = 1}^{T} \varphi_t \right).
\end{equation} 
  
To analyze our framework, we consider finite evidence $\Evidence$ containing \( f \) distinct attributive constituents about the SotW, and let $n_j$ be the number of attributive constituents corresponding to $C_j(x)$, i.e., the number of times $n_j$ that has shown up in set $\Evidence$. A constituent \( C^{w,k} \) of width \( w \) is compatible with \( \Evidence \) if \( f \leq w \leq \NumAttConst \), where $f$ is the width of the minimal constituent and \( \NumAttConst \) is the total number of possible attributive constituents. The posterior over \( \mathcal{C} \), the set of constituents, is given by:   
\begin{equation} \label{eq:posterior-dist}
    \ProbLoc{I} = \{c(C^{w,k}, \Evidence) \mid f \leq w \leq \NumAttConst \},
\end{equation}
with each term derived via Bayes' rule:
\begin{equation} \label{eq:posterior}
    p(C^{w,k} \mid \Evidence) = \frac{p(C^{w,k}) p(\Evidence \mid C^{w,k})}{\sum_{w'} p(C^{w',k}) p(\Evidence \mid C^{w',k})},
\end{equation}

If the prior \( p(C^{w,k}) \) and likelihood \( p(\Evidence \mid C^{w,k}) \) are selected as in \cite{c10}, the posterior $p(C^{w,k} \mid \Evidence)$ becomes:

\begin{align} \label{posti}
&p(C^{w,k} \mid \Evidence) = \\
&\parbox{0.9\linewidth}{%
\[
\frac{
    \pi(\alpha, w\lambda / K) \prod_{j=1}^{f} \pi(n_j, \lambda / w)
}{
    \sum_{i=0}^{K - f}
    \left[
        \binom{K - f}{i}
        \pi\left(\alpha, (f + i)\lambda / K\right)
        \prod_{j=1}^{f} \pi\left(n_j, \lambda / (f + i)\right)
    \right]
}
\]
} \nonumber 
\end{align}

\noindent where, $\alpha \geq 0$ is the inductive generalization coefficient. $\pi(\alpha, x)$ is a Pochhammer symbol (shifted factorial) defined as:
\begin{equation}\label{eqn12}
\pi(\alpha, x)  = \prod_{i = 0}^{\alpha - 1} (i + x) = \frac{\Gamma(x + \alpha)}{\Gamma(x)}.
\end{equation}
\noindent Here, $\Gamma(x)$ is the Gamma function. The expression in \eqref{posti} involves ratios of these Pochhammer symbols (and thus Gamma functions), making it complex.

However, this form is amenable to asymptotic analysis using Stirling’s approximation, which describes the behavior of the Gamma function for large arguments: $ \Gamma(z) \approx \sqrt{2\pi} \, z^{z - 1/2} e^{-z} $ for $ z \gg 0$. This approximation is generally accurate for $z \geq 1$ and very close for $z \geq 10$. Assuming large $\alpha$ (i.e., $\alpha \geq 1$) and/or large observation counts $n_j$ (i.e., $n_j \geq 1$), we can apply Stirling's approximation to the Gamma function terms within the Pochhammer symbols in \eqref{posti}.

After applying the approximation and focusing on the dominant terms (where many constants and exponential factors cancel out), the posterior $p(C^{w,k} \mid \Evidence)$ simplifies considerably. The key dependencies involving the counts $n_j$ become apparent in the simplified form:

\begin{align} \label{posti_approx}
&p(C^{w,k} \mid \Evidence) \approx \\
&\small \parbox{\linewidth}{%
\[
\frac{1}{
\sum_{i=0}^{K - f}
\binom{K - f}{i}
\frac{\Gamma\left( \frac{w\lambda}{K} \right)}{\Gamma\left( \frac{(f+i)\lambda}{K} \right)}
\alpha^{\frac{(f+i - w)\lambda}{K}}
\left(
\frac{\Gamma\left( \frac{\lambda}{w} \right)}{\Gamma\left( \frac{\lambda}{f + i} \right)}
\right)^f
\prod_{j=1}^{f} n_j^{\left(\frac{1}{f+i} - \frac{1}{w} \right)\lambda}
}
\]
} \nonumber 
\end{align}

\noindent where $\alpha \geq 1$ and $\lambda \geq 0$ are constant coefficients determining the impact of priors in singular inductive inferences and inductive generalizations \cite{c10}. Next, we formulate an optimization objective to determine an optimal method of semantic communication for hypothesis deduction.

\section{Single Round Semantic Communication as Distribution Approximation}

In formal semantics grounded in first-order logic (FOL), world representations possess a well-defined structure. This structure, as declared in (1), (2), (3), and (4), permits any FOL statement including the hypothesis to have a well-defined probability assignment. Under this framework, the semantics of a hypothesis and it's degree of confirmation can be determined. Hence, the optimal strategy for semantic communication for hypothesis deduction can be naturally formulated as the accurate transfer of this distribution. That is, the ultimate goal is for the receiver to reconstruct the intended distribution as accurately as possible. This fidelity criterion can be formalized using the Kullback–Leibler (KL) divergence between the transmitter \( p_t = p(C^{w,k}_t \mid \Evidence_t) \) and receiver \( p_r = p(C^{w,k}_r \mid \Evidence_r) \) posterior distributions:

\begin{equation}
    \mathcal{L}_{\text{KL}} = \mathrm{D}_{\mathrm{KL}}\left( p_t \,\big\|\, p_r \right) = \mathbb{C}(p_t, p_r) - \mathbb{H}(p_t)
\end{equation}

where \( \mathbb{H}(\cdot) \) is Shannon entropy and \( \mathbb{C}(\cdot) \) is cross-entropy. Since \( \mathbb{H}(p_t) \) is constant w.r.t. the transmission, minimizing \( \mathcal{L}_{\text{KL}} \) is equivalent to minimizing \( \mathbb{C}(p_t, p_r) \), or (after simplifying logarithm) maximizing:
\begin{equation} \label{eq:max_objective_condensed}
\max_{\{n_j'\}} \sum_{w=f}^{K} p_t(w) \cdot p_r(w) \quad \text{s.t.}
\end{equation}
\begin{equation} \label{eq:constraints}
\sum_{j=1}^f n_j' = B, \quad 0 \leq n_j' \leq n_j, \quad n_j' \in \mathbb{Z}_{\geq 0}
\end{equation}
Here, \( \{n_j'\} \) are the counts of the \( f \) distinct attributive constituent types, as determined by the selection of message $g$, to be transmitted to the receiver (forming \( \Evidence_r \)), \( B \) is the total transmission budget (e.g., total number of attributive constituents communicated), and \( n_j \) is the available number of attributive constituent \( C_j(x) \) at the transmitter (from \( \Evidence_t \)). The transmitter distribution \( p_t(w) = p(C^{w,k}_t \mid \Evidence_t) \) is fixed throughout communication as no new evidence is observed. The receiver distribution \( p_r(w) = p(C^{w,k}_r \mid \Evidence_r) \) depends on the chosen \( \{n_j'\} \) via the approximate posterior relation derived from \eqref{posti_approx}:
\begin{equation} \label{eq:pr_dependency}
p_r(w) \propto \left[ \sum_{i=0}^{K-f} A_w(i) \cdot \prod_{j=1}^f (n_j')^{\beta_i(w)} \right]^{-1}
\end{equation}
with \( \beta_i(w) := \left( \frac{1}{f+i} - \frac{1}{w} \right)\lambda \). Notably, \( \beta_i(w) \) is independent of the constituent type index \( j \), making the objective function \eqref{eq:max_objective_condensed} symmetric in the variables \( \{n_j'\} \).

This is a constrained resource allocation problem (of distributing the limited transmission budget $B$ across possible attributive constituents to best approximate the transmitter’s belief distribution) under integer constraints, typical in communication systems where discrete packets or symbols must be transmitted. To find the optimal allocation, we analyze its continuous relaxation (treating \( n_j' \) as real, temporarily ignoring \( n_j' \leq n_j \)). The Lagrangian is:
\begin{align}
\mathcal{L}(\{n_j'\}, \mu) &= \sum_{w=f}^K p_t(w) \cdot p_r(w) - \mu \left( \sum_{j=1}^f n_j' - B \right)
\end{align}
The Karush-Kuhn-Tucker (KKT) conditions require \( \frac{\partial \mathcal{L}}{\partial n_k'} = 0 \) for all \( k=1, \dots, f \). Computing the derivative yields a system of equations relating \( \{n_j'\} \) and the Lagrange multiplier \( \mu \). Crucially, due to the symmetry of the objective function \eqref{eq:pr_dependency} and the budget constraint \( \sum n_j' = B \), the expressions for \( \frac{\partial \mathcal{L}}{\partial n_k'} \) are identical for all \( k \). Therefore, the unique solution satisfying all KKT conditions simultaneously must be symmetric:
\begin{equation} \label{eq:symmetric_sol}
n_j'^* = \frac{B}{f}, \quad \text{for all } j = 1, \dots, f
\end{equation}
Given the symmetry, this stationary point corresponds to the global optimum of the relaxed problem \cite{boyd2004convex}. This suggests evenly distributing the transmission budget \( B \) across all \( f \) relevant semantic constituent types when capacities are unlimited.

In practice, we must respect the integer nature of counts and the per-attributive-constituent type capacity limits \( n_j' \leq n_j \), and hence (19) cannot be directly implemented. Instead, incorporating these constraints transforms the solution into a \textit{truncated water-filling} allocation. To find the optimal integer allocation \( \{n_j'^*\} \) satisfying \eqref{eq:constraints}, the constituent types are first sorted based on available counts (\( n_{(1)} \leq \dots \leq n_{(f)} \)). Next, a water level \( \theta \) is determined such that \( \sum_{j=1}^f \min(\theta, n_j) = B \), yielding the optimal continuous allocation respecting capacities, \( n_j^{\text{cont}} = \min(\theta, n_j) \). Finally, the integer allocation \( \{n_j'^*\} \) is obtained by rounding \( \{n_j^{\text{cont}}\} \) (e.g., using a standard rounding procedure) such that \( \sum_{j=1}^f n_j'^* = B \) and \( n_j'^* \leq n_j \) for all \( j \). This algorithm provides the optimal message composition strategy for single-round semantic communication under budget and evidence capacity constraints.

\begin{lemma}[Rounding Error Bound]
Let $\{n_j^{\mathrm{cont}}\}$ denote the optimal continuous allocation obtained by truncated water-filling and $\{n_j'^*\}$ the corresponding integer allocation after rounding subject to $\sum_j n_j'^* = B$. Then the deviation satisfies
\begin{equation}
\max_{j} \left| n_j'^* - n_j^{\mathrm{cont}} \right| \leq 1,
\end{equation}
and consequently the cumulative rounding error is bounded as
\begin{equation}
\left\| \mathbf{n}'^* - \mathbf{n}^{\mathrm{cont}} \right\|_1 \leq f.
\end{equation}
\end{lemma}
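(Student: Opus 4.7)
The plan is to give a constructive rounding argument that realizes the bound, using the fact that the continuous allocation already satisfies $\sum_j n_j^{\mathrm{cont}} = B \in \mathbb{Z}_{\geq 0}$ by construction of the water level $\theta$. First I would decompose each continuous value as $n_j^{\mathrm{cont}} = \lfloor n_j^{\mathrm{cont}} \rfloor + \phi_j$ with $\phi_j \in [0,1)$, and observe that $s := \sum_j \phi_j = B - \sum_j \lfloor n_j^{\mathrm{cont}} \rfloor$ is a nonnegative integer. Then I would specify the rounding procedure explicitly: sort the indices by $\phi_j$ in descending order, round up the top $s$ indices to $\lceil n_j^{\mathrm{cont}} \rceil$, and round down the remaining $f - s$ indices to $\lfloor n_j^{\mathrm{cont}} \rfloor$. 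By construction, $\sum_j n_j'^* = B$.

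Next, I would verify the per-coordinate bound. For every $j$, the rounded value $n_j'^*$ equals either $\lfloor n_j^{\mathrm{cont}} \rfloor$ or $\lceil n_j^{\mathrm{cont}} \rceil$, so $|n_j'^* - n_j^{\mathrm{cont}}|$ is either $\phi_j$ or $1 - \phi_j$, both of which lie in $[0,1]$. This immediately yields the $\ell_\infty$ bound $\max_j |n_j'^* - n_j^{\mathrm{cont}}| \leq 1$. I would also briefly confirm that the capacity constraint $n_j'^* \leq n_j$ is preserved: if $n_j^{\mathrm{cont}} = n_j$ (capacity-binding case), then $\phi_j = 0$ and rounding is a no-op; otherwise $n_j^{\mathrm{cont}} = \theta < n_j$, and since $n_j \in \mathbb{Z}$, we still have $\lceil n_j^{\mathrm{cont}} \rceil \leq n_j$.

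For the $\ell_1$ bound, I would simply sum the per-coordinate bounds: since each of the $f$ summands is at most $1$, the total $\sum_{j=1}^f |n_j'^* - n_j^{\mathrm{cont}}|$ is at most $f$. A sharper bookkeeping argument using the identity $\sum_{j \in U}(1-\phi_j) + \sum_{j \in D} \phi_j = 2(s - \sum_{j \in U} \phi_j)$, where $U$ and $D$ denote the rounded-up and rounded-down index sets, would actually give a strictly smaller constant, but the cleaner bound $f$ is sufficient for the statement.

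There is no real obstacle here; the statement is essentially a standard sum-preserving rounding fact. The only point requiring a small amount of care is ensuring that the capacity constraints $n_j'^* \leq n_j$ are not violated when rounding upward, which follows from the integrality of the $n_j$ together with the water-filling truncation $n_j^{\mathrm{cont}} = \min(\theta, n_j)$. Once this is noted, the lemma reduces to a one-line arithmetic bound on the number of coordinates times the maximal per-coordinate rounding gap.
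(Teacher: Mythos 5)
Your proof is correct, and it is in fact more complete than the paper's own argument. The paper's proof is a two-sentence appeal to nearest-integer rounding (``error at most $1$ per component'') plus an informal claim that the budget constraint forces positive and negative deviations to balance; it never exhibits a rounding that simultaneously preserves $\sum_j n_j'^* = B$, keeps every coordinate within $1$ of the continuous value, and respects the capacities $n_j'^* \leq n_j$. Your largest-remainder construction closes exactly that gap: integrality of $B$ makes $s = \sum_j \phi_j$ a nonnegative integer, rounding up the $s$ largest fractional parts preserves the budget exactly, the per-coordinate error is $\phi_j$ or $1 - \phi_j$, both in $[0,1]$, and your capacity check via $n_j^{\mathrm{cont}} = \min(\theta, n_j)$ together with integrality of $n_j$ handles a constraint the paper's proof does not address at all. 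The only fine point left implicit in your write-up is that each of the top $s$ indices must have $\phi_j > 0$ (so that rounding up genuinely adds $1$ rather than $0$); this holds automatically, since the indices with positive fractional part, each strictly below $1$, must number strictly more than their sum $s$. What the paper's version buys is brevity; what yours buys is an explicit algorithm and a verification that the feasible integer point the lemma promises actually exists.
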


\begin{proof}
Since rounding each component to the nearest integer introduces an error of at most $1$, and the budget constraint ensures that positive and negative deviations must balance across $f$ components, the $\ell_\infty$ bound follows directly, with the $\ell_1$ bound obtained by summation.
\end{proof}

\begin{remark}[Symmetry Assumption]
The global optimality of the symmetric allocation in~\eqref{eq:symmetric_sol} follows under the assumption that the feasible set is permutation invariant, i.e., the upper bounds on $n_j'$ are identical across all $j$. When heterogeneous bounds $n_j$ are present, the continuous stationary point remains symmetric but may no longer lie in the feasible set. In such cases, the optimal allocation is obtained via truncated water-filling, which respects individual capacity limits while preserving the symmetry argument within the feasible region.
\end{remark}

\section{Multi-Round Semantic Communication}

In many practical systems, communication occurs over multiple rounds. We extend our single-round water-filling framework to a multi-round setting with \( T \) rounds, where each round \( t \) has a transmission budget \( B \). The optimal strategy depends on when the receiver performs hypothesis deduction. We consider two primary scenarios:

\textbf{I. Long-term Optimization (Final Hypothesis Selection):}
If the receiver evaluates hypotheses only after all \( T \) rounds are completed, the transmitter's goal is to optimize the quality of the final, cumulative message set. The total budget available across all rounds is \( TB \). This scenario effectively reduces to the single-round optimization problem (Section V) but with an increased total budget.
The optimal cumulative allocation \( \{n_j'^{\text{total}}\} \) is found using the truncated water-filling algorithm with total budget \( TB \) and the initial available capacities \( \{n_j\} \). Specifically, find the water level \( \theta_{\text{total}} \) such that \( \sum_{j=1}^{f} \min(\theta_{\text{total}}, n_j) = TB \), and the corresponding allocation is \( n_j'^{\text{total}} = \min(\theta_{\text{total}}, n_j) \) (followed by integer rounding respecting constraints).

\textbf{II. Short-term Optimization (Per-Round Hypothesis Selection):}
If the receiver performs hypothesis deduction after each round \( t \), the transmitter must optimize the message sent in the current round to maximize the receiver's inference quality incrementally, considering the information already transmitted.
At each round \( t \) (\( 1 \leq t \leq T \)), the transmitter calculates the optimal allocation \( \{n_j'^{(t)}\} \) for the current round using the truncated water-filling algorithm with the per-round budget \( B \) and the remaining capacities \( \{r_j(t)\} \). The remaining capacity for constituent type \( j \) at the start of round \( t \) is:
\begin{equation}
r_j(t) = n_j - \sum_{\tau=1}^{t-1} n_j'^{(\tau)}
\end{equation}
where \( n_j'^{(\tau)} \) was the allocation for type \( j \) transmitted in round \( \tau \). The water level \( \theta_t \) for round \( t \) satisfies \( \sum_{j=1}^f \min(\theta_t, r_j(t)) = B \), and the allocation for round \( t \) is \( n_j'^{(t)} = \min(\theta_t, r_j(t)) \) (followed by integer rounding respecting the budget \( B \) and remaining capacities \( r_j(t) \)). This myopic strategy maximizes the immediate semantic gain at each round under the per-round budget constraint.

\section{Message Selection Algorithm from Candidates}

The strategies above assume the transmitter can construct the exact water-filling allocation. In practice, the transmitter might have a predefined set of \( M \) candidate messages, \( \mathbf{M} = \{\mathcal{A}^{(1)}, \dots, \mathcal{A}^{(M)}\} \). For example, these message can be identical to the M evidences that the transmitter has observed. Each candidate message \( \mathcal{A}^{(m)} = \{n_{j}^{(m)}\}_{j=1}^f \) represents a valid allocation of semantic constituents for a single round, satisfying \( \sum_{j=1}^f n_{j}^{(m)} = B \) and \( n_{j}^{(m)} \leq n_j \) (assuming capacities \( n_j \) are known or estimated). The transmitter must select \( T' \) (\( T' \leq T \)) messages from \( \mathbf{M} \) for transmission over \( T' \) rounds. This selection should approximate the ideal strategies outlined in Section VI.

Again, the selection algorithm depends on the optimization objective:

\textbf{a) Approximate Long-Term Optimization:}
To approximate the ideal long-term strategy, the transmitter selects a subset \( \mathcal{S} \subset \mathbf{M} \) of size \( |\mathcal{S}| = T' \) such that the final cumulative allocation \( \mathbf{N}^{\text{final}} = \sum_{m \in \mathcal{S}} \mathcal{A}^{(m)} \) is as close as possible to the ideal total water-filling allocation \( \mathrm{WF}^{\text{final}} \) calculated with budget \( T'B \). The objective is to find the subset \( \mathcal{S} \) that minimizes:
\begin{equation}
\left\| \sum_{m \in \mathcal{S}} \mathcal{A}^{(m)} - \mathrm{WF}^{\text{final}} \right\|
\end{equation}
subject to the constraint that the cumulative allocation respects capacities: \( \sum_{m \in \mathcal{S}} n_j^{(m)} \leq n_j \) for all \( j \). In this case, the order of message transmission within the \( T' \) rounds does not affect the final outcome. This is a subset selection problem, which can be complex but aims to match the overall target distribution optimally using the available message pool.

\textbf{b) Approximate Short-Term Optimization:}
To approximate the ideal per-round optimization strategy, the transmitter should select messages greedily over \( T' \) rounds. At each round \( t \) (\( 1 \leq t \leq T' \)), let the cumulative allocation sent so far be \( \mathbf{N}^{(t-1)} = \sum_{\tau=1}^{t-1} \mathcal{A}^{(\sigma(\tau))} \), where \( \sigma(\tau) \) is the index of the message sent in round \( \tau \). The ideal target cumulative allocation after \( t \) rounds, according to the ideal short-term strategy, is \( \mathrm{WF}^{(t)} \), representing the water-filling allocation with total budget \( tB \). The transmitter selects the next message \( \mathcal{A}^{(\sigma(t))} \) from the remaining candidates in \( \mathbf{M} \) (ensuring capacities are not exceeded, i.e., \( N_j^{(t-1)} + n_j^{(\sigma(t))} \le n_j \)) that minimizes the deviation from the ideal target for the current cumulative step, e.g., minimizing \( \|\mathbf{N}^{(t-1)} + \mathcal{A}^{(\sigma(t))} - \mathrm{WF}^{(t)} \| \). This involves selecting both the message and its transmission order \( \sigma \). A common heuristic is to greedily choose the message at round \( t \) that brings the current cumulative allocation \( \mathbf{N}^{(t)} \) closest to the ideal target \( \mathrm{WF}^{(t)} \).

\section{Convergence Analysis}

We now analyze the convergence properties of the practical message selection algorithms presented in Section VII, comparing their performance to the ideal water-filling allocations derived in Section VI. The analysis considers the two optimization objectives: minimizing cumulative error over rounds (short-term) and minimizing final error (long-term). Let $T'$ be the number of messages selected and transmitted.

\enlargethispage{-2\baselineskip}

\begin{lemma}[Convergence Rate for Short-Term Optimization Approximation]
Let $\mathbf{N}^{(T')} = \sum_{t=1}^{T'} \mathcal{A}^{(\sigma(t))}$ denote the cumulative allocation after $T'$ rounds using the greedy message selection algorithm (Section VII.a) which aims to approximate the ideal short-term strategy. Let $\mathrm{WF}^{(T')}$ be the ideal cumulative water-filling allocation targeting budget $T'B$. The maximum deviation between the achieved allocation and the ideal target satisfies:
\begin{equation}
\max_{j=1,\dots,f} \left| N_j^{(T')} - \mathrm{WF}_j^{(T')} \right| = \mathcal{O}\left( \frac{f}{T'} \right).
\end{equation}

\end{lemma}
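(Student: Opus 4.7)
The plan is to track the per-component cumulative deviation $\epsilon_t := \mathbf{N}^{(t)} - \mathrm{WF}^{(t)}$ and argue that the greedy rule contracts it at a rate forcing $\|\epsilon_{T'}\|_\infty$ to decay like $f/T'$. I would first write the one-step recursion
\[
\epsilon_t = \epsilon_{t-1} + \mathcal{A}^{(\sigma(t))} - \Delta\mathrm{WF}^{(t)}, \qquad \Delta\mathrm{WF}^{(t)} := \mathrm{WF}^{(t)} - \mathrm{WF}^{(t-1)},
\]
and note that because both $\mathcal{A}^{(\sigma(t))}$ and $\Delta\mathrm{WF}^{(t)}$ allocate exactly $B$ units across the $f$ active components, every $\epsilon_t$ lies in the zero-sum hyperplane $\sum_j \epsilon_t^{(j)} = 0$. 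This confines the error to a bounded-dimensional subspace even as $\mathbf{N}^{(t)}$ and $\mathrm{WF}^{(t)}$ themselves grow linearly with $t$.

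Next, I would invoke the greedy rule of Section VII.b: by construction $\sigma(t)$ minimizes $\|\mathbf{N}^{(t-1)} + \mathcal{A}^{(m)} - \mathrm{WF}^{(t)}\|$ over the capacity-feasible candidate subset $\mathcal{M}_t \subseteq \mathcal{M}$, so
\[
\|\epsilon_t\|_\infty \leq \min_{m \in \mathcal{M}_t} \|\epsilon_{t-1} + \mathcal{A}^{(m)} - \Delta\mathrm{WF}^{(t)}\|_\infty.
\]
Combined with the preceding rounding-error lemma, which bounds the $\ell_\infty$-distance between $\mathrm{WF}^{(t)}$ and its continuous counterpart by $1$, this yields a per-round contraction of the form $\|\epsilon_t\|_\infty \leq (1 - c/t)\|\epsilon_{t-1}\|_\infty + O(1/t)$, where $c$ captures how well the candidate pool can track the ideal per-round increment relative to the current error.

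I would then solve the recursion by telescoping over $t = 1, \dots, T'$. The product $\prod_{t=1}^{T'}(1 - c/t)$ decays polynomially in $T'$, and tracking the $O(1/t)$ residual contributions from rounding and from the at most $f$ saturating capacity constraints collapses the bound on $\|\epsilon_{T'}\|_\infty$ to a constant of order $f$ divided by $T'$. The factor $f$ enters because each of the $f$ components contributes an $O(1)$ per-round rounding discrepancy (by the preceding lemma) and the zero-sum hyperplane structure amortizes the total imbalance across them, producing $\|\epsilon_{T'}\|_\infty = \mathcal{O}(f/T')$ as claimed.

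The main obstacle is formalizing the per-round contraction factor $c/t$. It rests on showing that for any in-hyperplane residual $\epsilon_{t-1}$, some candidate $\mathcal{A}^{(m)} \in \mathcal{M}_t$ approximates the target increment $\Delta\mathrm{WF}^{(t)} - \epsilon_{t-1}/t$ up to a constant per-component grain; without such a richness property of $\mathcal{M}$ the bound degrades to a flat $\mathcal{O}(1)$ deviation. The symmetry of the water-filling target established in the remark and the integer structure from the rounding lemma provide the backbone of this argument, but converting them into a clean quantitative statement about candidate-set coverage is the technical crux.
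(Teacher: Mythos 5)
Your setup---tracking the residual $\epsilon_t = \mathbf{N}^{(t)} - \mathrm{WF}^{(t)}$, the zero-sum hyperplane observation, and the greedy comparison inequality---is sound, and it is more explicit than the paper, whose own argument is only an informal sketch asserting that greedy correction of the largest deviations decays the error proportionally to $1/T'$ with a balancing factor $f$. But your quantitative step fails: the recursion $\|\epsilon_t\|_\infty \leq (1 - c/t)\|\epsilon_{t-1}\|_\infty + C/t$ does \emph{not} telescope to $\mathcal{O}(f/T')$. For $c = 1$, multiplying through by $t$ gives $t\|\epsilon_t\|_\infty \leq (t-1)\|\epsilon_{t-1}\|_\infty + C$, hence $\|\epsilon_{T'}\|_\infty \leq C$; for general $c > 0$ the inhomogeneous term contributes $\sum_{s \leq T'} (C/s)\,(s/T')^{c} = \Theta(1)$. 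So the recursion you wrote yields only an $\mathcal{O}(1)$ bound---which is the paper's \emph{long-term} lemma, not this one. To extract a genuine $1/T'$ decay you would need per-round residuals of order $1/t^{2}$, and nothing in the setup provides that: each round injects an integer-grain rounding discrepancy of order $1$ (your own invocation of the Rounding Error Bound says exactly this), so the absolute deviation cannot be driven below a constant with a fixed discrete candidate pool. There is a second, compounding gap that you half-acknowledge yourself: the contraction factor $c/t$ rests on a coverage property of $\mathcal{M}$ (some feasible candidate approximating $\Delta\mathrm{WF}^{(t)} - \epsilon_{t-1}/t$ to constant grain) that neither the lemma nor Section VII assumes, so even the recursion's hypothesis is conditional.

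The natural repair, consistent with the paper's intent, is to change \emph{what} is bounded rather than force the absolute error to decay. Use your hyperplane and greedy machinery to show uniform boundedness, $\|\epsilon_t\|_\infty = \mathcal{O}(f)$ for all $t$: since every candidate message and every water-filling increment sums to exactly $B$, the residual stays in the zero-sum hyperplane, and greedy correction of the largest coordinates with messages of size $B$ prevents any coordinate from drifting beyond a constant depending on $f$ and the pool granularity. Then read the lemma's bound as a \emph{relative} deviation: normalizing by the cumulative budget $T'B$---which is what governs the receiver's posterior $p_r$, since it depends on the empirical composition of counts rather than their absolute magnitudes---turns $\mathcal{O}(f)$ absolute error into $\mathcal{O}(f/T')$ normalized error. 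That reading matches the paper's sketch ("deviation \ldots tends to decrease proportionally to $1/T'$, the factor $f$ accounts for balancing") and is provable from your ingredients; the attempt to make the unnormalized error itself vanish is the step that breaks.
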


\textit{Proof Sketch.} The greedy algorithm at each round $t$ selects a message $\mathcal{A}^{(\sigma(t))}$ to minimize the deviation of the current cumulative sum $\mathbf{N}^{(t)}$ from the ideal target $\mathrm{WF}^{(t)}$. By iteratively reducing the largest deviations across the $f$ constituent types using messages of bounded size ($B$), the overall maximum deviation after $T'$ rounds tends to decrease proportionally to $1/T'$. The factor $f$ accounts for balancing across all constituent types.

\begin{lemma}[Convergence Rate for Long-Term Optimization Approximation]
Let $\mathbf{N}^{\text{final}} = \sum_{m \in \mathcal{S}} \mathcal{A}^{(m)}$ be the final cumulative allocation achieved by selecting a subset $\mathcal{S}$ of $T'$ messages from the candidate pool $\mathbf{M}$ using the algorithm in Section VII.b, aiming to approximate the ideal long-term strategy. Let $\mathrm{WF}^{\text{final}}$ be the ideal water-filling allocation targeting budget $T'B$. The maximum deviation satisfies:
\begin{equation}
\max_{j=1,\dots,f} \left| N_j^{\text{final}} - \mathrm{WF}_j^{\text{final}} \right| = \mathcal{O}(1).
\end{equation}
\end{lemma}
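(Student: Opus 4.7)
The approach is to decompose the maximum deviation into two independent contributions: the quantization gap between the ideal continuous water-filling allocation at total budget $T'B$ and any feasible integer allocation summing to $T'B$, and the subset-selection optimality gap incurred by approximating such an integer target using elements of the candidate pool $\mathcal{M}$. First I would invoke the rounding error lemma at total budget $T'B$, which bounds the quantization contribution by a constant (at most one per component, independent of $T'$).

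The main step is to bound the subset-selection error. Since each candidate $\mathcal{A}^{(m)}$ satisfies $\sum_{j} n_j^{(m)} = B$, the contribution of any single message to any constituent type is at most $B$. Unlike the short-term setting, the selection here is a single global optimization over the cumulative sum $\mathbf{N}^{\mathrm{final}} = \sum_{m \in \mathcal{S}} \mathcal{A}^{(m)}$, and the total mass constraint $\sum_j N_j^{\mathrm{final}} = T'B$ already matches $\sum_j \mathrm{WF}_j^{\mathrm{final}} = T'B$, so only the distribution across types contributes to the deviation. I would then argue by a local exchange: if some coordinate $k$ satisfies $N_k^{\mathrm{final}} - \mathrm{WF}_k^{\mathrm{final}} > cB$ for a suitable absolute constant $c$, then by mass conservation some coordinate $\ell$ is under-filled by a comparable amount, and swapping a message in $\mathcal{S}$ that is heavy on $k$ for one heavy on $\ell$ strictly reduces the subset-selection objective, contradicting the optimality of $\mathcal{S}$. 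This yields $\max_j |N_j^{\mathrm{final}} - \mathrm{WF}_j^{\mathrm{final}}| \leq cB$, which is $\mathcal{O}(1)$ in $T'$.

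The main obstacle is executing the exchange argument without implicit richness assumptions on $\mathcal{M}$: an adversarial candidate pool consisting only of messages concentrated on the same constituent type could force the deviation to grow with $T'$, so I would need to state a mild diversity hypothesis on $\mathcal{M}$ (for instance, that for any pair of constituent types appearing in the ideal allocation there is a candidate that trades mass between them) under which a beneficial swap is always available. Combining the quantization bound with the $\mathcal{O}(B)$ exchange bound delivers the claimed $\mathcal{O}(1)$ rate, and the contrast with the short-term lemma becomes transparent: the short-term algorithm is myopic and incurs a $1/T'$ residual because early-round errors must be amortized across later rounds, whereas the long-term algorithm optimizes jointly over the full pool and is limited only by the per-message granularity $B$.
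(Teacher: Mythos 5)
Your plan is in the same spirit as the paper's treatment, but it is substantially more concrete, and the comparison is instructive. The paper's own ``proof'' is purely heuristic: it asserts that the error is set by the granularity of the candidate messages and by how well combinations of $T'$ messages can match the target, and concludes that the deviation stays bounded by a constant ``reflecting the quality and diversity of the pool $\mathcal{M}$'' without decreasing in $T'$. Your decomposition into a quantization term (handled by the Rounding Error Bound lemma at budget $T'B$) plus a subset-selection term, together with the mass-conservation/exchange argument, is exactly the machinery needed to turn that sketch into a proof, and your explicit diversity hypothesis makes visible the assumption the paper hides inside its constant. Two caveats, though. First, what your exchange argument would deliver is the sharper \emph{conditional} bound $\max_j |N_j^{\text{final}} - \mathrm{WF}_j^{\text{final}}| \le cB$ under pool diversity; that is stronger than, but not identical to, the lemma as stated, which carries no hypothesis on $\mathcal{M}$. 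Second, for the literal claim --- $\mathcal{O}(1)$ with respect to $T'$ --- there is a one-line argument that neither you nor the paper invokes: the selection rule requires $N_j^{\text{final}} = \sum_{m \in \mathcal{S}} n_j^{(m)} \le n_j$, and the water-filling target satisfies $\mathrm{WF}_j^{\text{final}} = \min(\theta, n_j) \le n_j$, so $\max_j |N_j^{\text{final}} - \mathrm{WF}_j^{\text{final}}| \le \max_j n_j$, a constant independent of $T'$ and $f$. This also defuses your stated obstacle: an adversarial pool concentrated on one constituent type cannot make the deviation grow indefinitely with $T'$, because the capacity constraint renders such a selection infeasible once $T'B$ exceeds that type's capacity. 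In short, your route proves a more informative statement under an extra (and necessary, for that statement) hypothesis, while the paper's literal $\mathcal{O}(1)$ claim is true unconditionally but for the nearly trivial capacity reason above --- which is worth recording, since it shows the $\mathcal{O}(1)$ bound by itself says nothing about pool quality; that information only enters the $\mathcal{O}(B)$ refinement you are after.
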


\textit{Proof Sketch.} This algorithm selects the best fixed subset $\mathcal{S}$ of size $T'$ from the candidate pool $\mathbf{M}$ to match the final target $\mathrm{WF}^{\text{final}}$. The approximation error is primarily determined by the granularity of the candidate messages in $\mathbf{M}$ and how well any combination of $T'$ messages can match the specific target profile. Since the optimization only considers the final state and relies on a potentially limited discrete set of candidate messages, the error bound does not necessarily decrease with $T'$ and remains bounded by a constant factor reflecting the quality and diversity of the pool $\mathbf{M}$.

\section{Simulation Results and Discussion}
\label{sec:simulation}

\begin{figure}[htbp]
    \centering
    \includegraphics[width=0.95\linewidth]{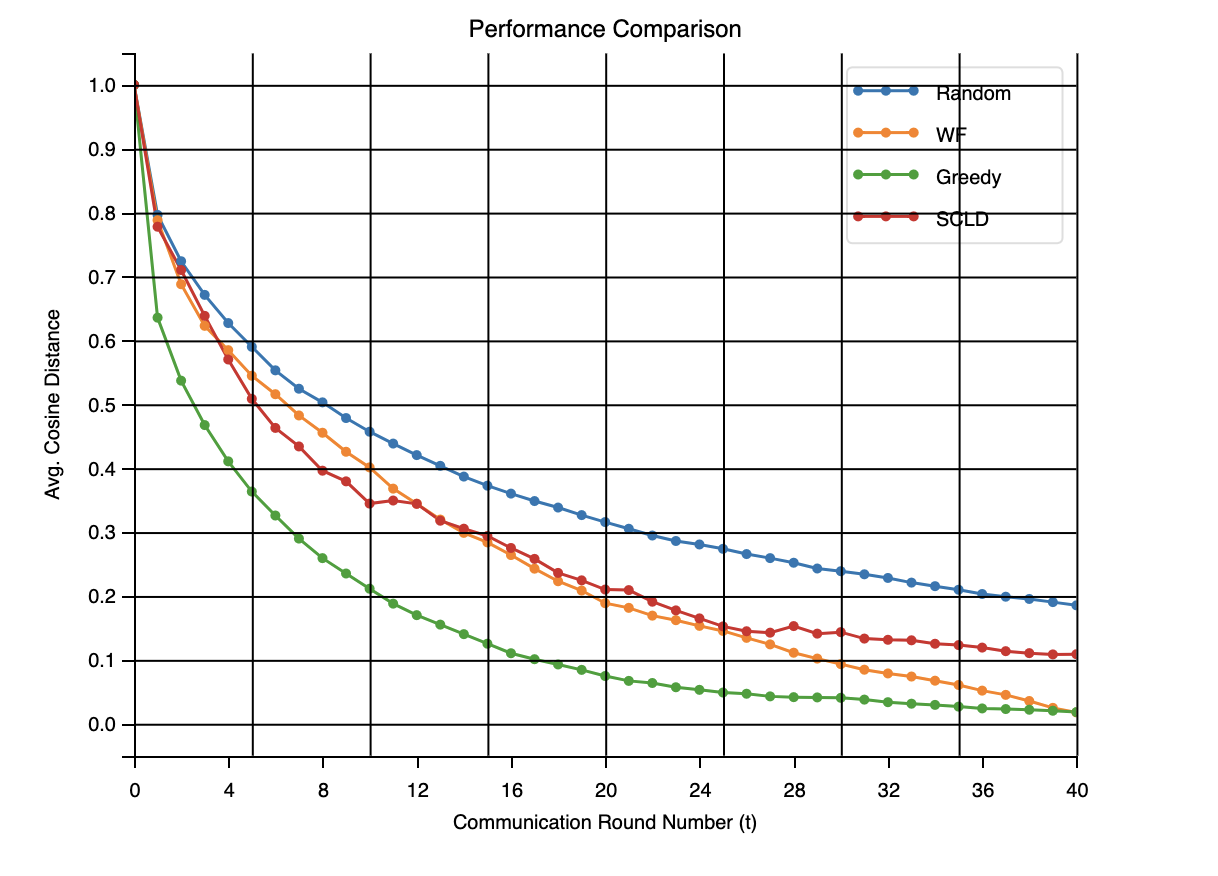} 
    \caption{Average Cosine Distance between receiver and transmitter distributions over $T=20$ rounds. Lower is better.}
    \label{fig:sim_distance}
\end{figure}

We evaluate the proposed semantic communication strategies through simulations designed to measure both the alignment between transmitter and receiver distributions and the resulting accuracy in hypothesis deduction. All derivations and simulations assume an error-free logical channel, i.e., transmitted constituents are received without distortion. This assumption isolates the effect of semantic resource allocation from channel impairments. In practice, symbol errors and partial misdecoding are unavoidable \cite{yang2024harnessing}, and their impact on the proposed allocation strategy remains an open question. Preliminary robustness checks under small error rates indicate graceful degradation, and a systematic analysis of noisy channels constitutes an important direction for future work.

The system consists of a transmitter that observes $N_{obs}$ attributive constituents drawn from a distribution over $K$ types, while the receiver begins with no prior evidence. Communication proceeds for $T$ rounds, each constrained by a budget $B$ constituents per message. Unless otherwise stated, parameters are set to $K \approx 1000$, $N_{obs}=750$, $T=20$, $B=5$, $\lambda=1.0$, and the hypothesis set size $K_{\mathrm{hypo}}=10$. For each run, one hypothesis is selected as the ground truth and the transmitter’s evidence is generated from its associated distribution. All results are averaged over $N_{\mathrm{sim}}=50$ independent runs.

Five strategies are compared. The Random-Free and Random-Chunk methods serve as baselines, selecting constituents either directly from the transmitter’s pool or from predefined evidence chunks. The SCLD approach follows the semantic cont-information framework by selecting messages that maximize immediate informativeness. Finally, the WF-Greedy and WF-Long strategies implement the proposed truncated water-filling allocation, the former optimizing myopically at each round and the latter computing a global plan over the total budget $TB$.

Performance is quantified using Cosine distance between the transmitter’s and receiver’s distributions. Although many similarity measures are available in the literature—including Kullback–Leibler divergence, Jensen–Shannon distance, Total Variation distance, and Hellinger distance—it is well established that these measures provide closely related trends when comparing probability distributions. Cosine distance therefore serves as a representative and computationally efficient proxy. Extending the evaluation to multiple metrics may nevertheless provide additional insights into the sensitivity of the algorithms, and constitutes an important direction for future work.

Figure~\ref{fig:sim_distance} shows the convergence of the receiver’s distribution toward the transmitter’s distribution over $T=20$ rounds. Both water-filling strategies significantly outperform the baselines, with WF-Greedy providing faster initial convergence and WF-Long achieving the lowest final divergence. WF-Greedy and WF-Long both reach $98\%$, compared to $89\%$ for SCLD and below $80\%$ for the random baselines. These results confirm that aligning posterior distributions leads directly to improved hypothesis inference.

Experiments further reveal that the relative differences in performance among the strategies depend on and vary with the communication budget $B$ and the number of samples $N_{obs}$. In some regimes, WF-Greedy provides stronger advantages, while in others WF-Long dominates. There are certain cases where random provides close accuracy to WF-Greedy and even surpassing WF-Long. This variability suggests that sensitivity analysis across a broader range of budgets, sample sizes, and evidence imbalance is necessary to fully characterize algorithmic performance. Moreover, although the present evaluation is limited to synthetic Dirichlet–multinomial environments, extending experiments to real-world reasoning datasets involving noise, incomplete evidence, and large-scale logical corpora and incorporating varying budget constraints will provide a more realistic assessment of the proposed framework. Despite these limitations, the initial results are highly promising, demonstrating the effectiveness of posterior alignment as a principle for semantic communication in hypothesis deduction.

\section{Conclusion}
\label{sec:conclusion}

This paper analyzed semantic communication for logic-based hypothesis deduction by framing message transmission as posterior distribution alignment between transmitter and receiver. We showed that the resulting constrained resource allocation problem admits a globally optimal truncated water-filling solution, and extended the framework to multi-round communication with established convergence properties. Simulations demonstrated that aligning posteriors significantly improves hypothesis inference, with both greedy and long-term strategies outperforming random and cont-information baselines under the MAP rule, while remaining robust to evidence imbalance and moderate noise. Although our study was limited to synthetic Dirichlet–multinomial environments and assumed noiseless communication, the promising results motivate future work on scalability to larger logical corpora, robustness under noisy or incomplete evidence, and integration with practical constraints in next-generation semantic communication.

\bibliographystyle{unsrt} 
\bibliography{refs, privacy, paper} 

\appendices

\section{Model Theoretic Foundations of Language $\Language$}

Consider a language $\Language$ whose syntax is defined over vocabulary consisting of a finite set of dyadic predicate symbols $\Predicate = \{P_i\}_{i=1}^{k}$, a countably infinite set of variables $\Variable = \{x_i\}_{i \in \mathbb{N}}$, a countable set of constant symbols $\Constant = \{\top, \bot, ...\}$, the logical connectives $\{\neg, \land, \lor\}$, the quantifiers $\{\forall, \exists\}$, and parentheses $()$ for disambiguation. Constant symbols and variables are called terms $\Term = \Variable \cup \Constant$. The set of \emph{formulas} of $\Language$ is defined inductively, beginning with atomic formulas and extending via logical connectives and quantifiers. In particular, if $P \in \Predicate$ and $t_1, t_2 \in \Term$, then $P(t_1,t_2)$ is an atomic formula. If $F,G$ are formulas, then $\neg F$, $F \land G$, and $F \lor G$ are also formulas. Finally, if $F$ is a formula and $x_i$ is a variable, then both $(\exists x_i)F$ and $(\forall x_i)F$ are formulas.

Syntax of the language carry no meaning independent of interpretation. Interpretation is introduced into $\Language$ by specifying \emph{models} for $\Language$. A \emph{model} for $\Language$ is a pair $\Model = \langle \Entity, J \rangle$, 
where $\Entity = \{e_i\}_{i \in \mathbb{N}}$ is a nonempty, countably infinite set called the \emph{domain} (or universe), whose elements are referred to as \emph{entities}. $J$ is an interpretation function assigning to each constant symbol $c_i \in \Constant$ entities $J(c_i) \subseteq \Entity$, to each binary predicate symbol $P \in \Predicate$ a binary relation $ J(P) \subseteq \Entity \times \Entity$.

A \emph{valuation} is a function $v : \Variable \to \Entity$ assigning elements of the domain to the variables of the language. Given a model $\Model = \langle \Entity, J \rangle$ and a valuation $v$, the satisfaction relation $\Model, v \models F$ determines whether a formula $F$ holds in $\Model$ under the variable assignment specified by $v$.

In particular, an atomic formula $P(t_1,t_2)$ is satisfied in $\Model$ under a valuation $v$ if and only if $(v(t_1), v(t_2)) \in J(P)$. Similarly, $(\exists x)F$ is satisfied in $\Model$ under $v$ if there exists $e_i \in \Entity$ such that $\Model, v[x_i \mapsto e_i] \models F$. Rest of the relations are defined recursively, following the structure of formulas.

The model $\Model = \langle \Entity, J \rangle$, together with a valuation $v$, determines the truth values of formulas via the satisfaction (entailment) relation $\models$, defined inductively on the structure of formulas. In particular, $\mathcal{M},v\models P(t_1,t_2)$ iff $(v(t_1), v(t_2))\in J(P)$; $\mathcal{M},v\models \neg F$ iff $\mathcal{M},v\not\models F$; $\mathcal{M},v\models F\lor G$ iff $\mathcal{M},v\models F$ or $\mathcal{M},v\models G$; and $\mathcal{M},v\models(\exists x_i)F$ iff there exists some $e_i$ such that $\mathcal{M},v[x_i \mapsto e_i]\models F$. The remaining cases are defined analogously.

If $\mathcal{M},v\models F$, we say that $F$ is true in $\mathcal{M}$ under the valuation $v$. For a fixed model $\mathcal{M}$ and formula $F$, the truth of $F$ depends only on the values assigned by $v$ to the free variables $x_i$ of $F$. In particular, if $F$ contains no free variables, then either $\mathcal{M},v\models F$ holds for all valuations $v$ or for none.

Accordingly, if $\mathcal{M},v\models F$ for all valuations $v$, we write $\mathcal{M}\models F$ and say that $F$ is true in $\mathcal{M}$, or equivalently, that $\mathcal{M}$ is a model of $F$. More generally, if $R$ is a set of formulas and $\mathcal{M}\models R$ for every $R_i\in X$, then $\mathcal{M}$ is a model of $X$, written $\mathcal{M}\models X$. Finally, if $\mathcal{M}\models F$ holds for all models $\mathcal{M}$, we write $\models F$ and call $F$ (logically) valid.

\section{Distributive Normal Forms}
Distributive normal forms for full first-order logic generalize the \emph{complete disjunctive normal forms} (CDNF), also known as canonical DNFs or sums of minterms, from propositional and monadic first-order logic. In propositional logic, a CDNF is a disjunction of conjunctions in which every variable appears exactly once in each conjunction, either positively or negated. Each conjunction corresponds to a truth assignment that makes the formula true, so two formulas are logically equivalent if and only if they have the same CDNF. For example, over propositional variables $p$ and $q$, the formula $p \lor q$ has the CDNF
$(p \land q)\ \lor\ (p \land \neg q)\ \lor\ (\neg p \land q)$. Constituents play an analogous role in first-order logic: they function as complete normal forms, providing maximally informative descriptions of possible configurations.

\section{Probability Assignment to Closed Sentences and Open Formulas with Valuations}
It is important to distinguish clearly between a \emph{model} and a \emph{valuation}. A model $\mathcal{M} = \langle \Entity,J \rangle$ specifies the domain of entities $\Entity$ and the interpretation of predicate symbols $J$, thereby fixing the relational structure of the world. A valuation $v : \Variable \to \Entity$, on the other hand, assigns elements of the domain $\Entity$ to the variables $\Variable$ of the language and serves to evaluate formulas relative to a given model.

In probability theory, probabilities are assigned to elements of a $\sigma$-algebra of events, i.e., to subsets of an outcome space $\Omega$ that are closed under complement and countable union. In logical probability, the outcome space is typically taken to be a set of possible \emph{states of the world}, represented by models $\mathcal M$ (or by model--valuation pairs $(\mathcal M,v)$). Accordingly, a closed first-order sentence $\varphi$, examples of which include $P(e_1, e_2), \, \exists x_1 P(x_1, e_2), \, \exists x_1 \forall x_2 P(x_1, x_2)$ for entities $e_1, e_2$ and variables $x_1, x_2$, has a determinate truth value in every model, in the sense that either $\mathcal M \models \varphi$ or $\mathcal M \not\models \varphi$. Thus, it naturally determines an event
\[
[\varphi] := \{\, \mathcal M \in \Omega : \mathcal M \models \varphi \,\}.
\]
A probability measure $\mu$ on an appropriate $\sigma$-algebra over $\Omega$ therefore induces a probability assignment to closed sentences via $\Pr(\varphi) := \mu([\varphi])$, with the usual identities $\Pr(\neg \varphi)=1-\Pr(\varphi)$ and $\Pr(\varphi \lor \psi)=\Pr(\varphi)+\Pr(\psi)-\Pr(\varphi \land \psi)$.  

By contrast, an open formula $\varphi(x_1,\dots,x_n)$, examples of which include $P(v[x_1 \mapsto e_1], v[x_2 \mapsto e_2])$, does not possess a truth value independently of a valuation $v : \Variable \to \Entity$ and hence does not, by itself, determine a subset of $\Omega$. Once a valuation is fixed, however, the satisfaction relation $\mathcal M,v \models \varphi$ induces the event
\[
[\varphi,v] := \{\, \mathcal M \in \Omega : \mathcal M,v \models \varphi \,\},
\]
which is semantically equivalent to the event determined by the corresponding grounded sentence $\varphi(c_1,\dots,c_n)$ with $c_i = v(x_i)$, even though the open formula and the closed sentence remain syntactically distinct. Consequently, probabilities may be coherently assigned \emph{either to closed sentences or to open formulas relative to a valuation (or parameters)}, but not to open formulas as bare syntactic objects. This distinction reflects why only truth-valued formulas can serve as events satisfying Kolmogorov’s axioms.

This point becomes especially important when considering \emph{constituents}. Constituents—understood as complete disjunctive normal forms of a first-order language—are purely syntactic constructions. They may contain free variables, provided these variables occur uniformly across all disjuncts. The presence of free variables does not obstruct their syntactic definition or manipulation, since normal-form construction proceeds independently of interpretation. However, the existence of such syntactic constituents does \emph{not} warrant the assignment of probabilities to them in the absence of a model and, when relevant, a valuation. Open formulas or constituents containing free variables do not determine sets of models and therefore fail to define measurable events; assigning probabilities to them directly would undermine the set-theoretic foundations of probability by leaving complements, unions, and intersections undefined.

In inductive-logical frameworks, prior probabilities are therefore assigned to closed constituents. These priors depend only on structural symmetries of the language and not on particular valuations of variables, in the spirit of Carnap-style inductive logic \cite{carnap_logical_1962}. Valuations enter only at the posterior stage, through Bayesian conditioning on evidence. Posterior probabilities $p(C^{w,k}\mid\Evidence)$ reflect how frequently particular models (and therefore constituents they logically entail) are supported by the observed data as is the true but unknown model $\mathcal M$ of the world $\Entity$. Since the domain of entities is assumed to be countably infinite, we further assume that only a finite subset of valuations is instantiated in the evidence.

Once a posterior distribution over constituents has been determined, probabilities can then be assigned to closed sentences, or to open formulas relative to specified valuations, by considering the set of models in which they are satisfied. In this way, free variables may appear at the syntactic level of constituent construction, but probabilistic interpretation ultimately applies only to closed sentences or to open formulas evaluated under an explicit valuation.

\end{document}